\documentclass[aps,pra,twocolumn,10pt]{revtex4-2}

\usepackage{float,graphicx,multirow,amsmath,
color,dsfont,amsthm,amssymb,comment}
\usepackage{physics}

\usepackage{xcolor}
\definecolor{maroon}{RGB}{100,20,20}
\definecolor{dblue}{RGB}{20,20,100}

\usepackage[colorlinks=true,
linkcolor=dblue,
citecolor=maroon,
urlcolor=maroon]{hyperref}

\newtheorem{definition}{Definition}
\newtheorem{proposition}[definition]{Proposition}

\begin{document}

\title{Ancilla Assisted Quantum Process Tomography using
Bound entangled states}

\author{Gurvir Singh}
\email{gurvirsingh@iisermohali.ac.in}

\affiliation{
Department of Physical Sciences, Indian Institute of Science
Education and Research (IISER) Mohali,
Sector 81 SAS Nagar, Manauli PO 140306, Punjab, India}

\begin{abstract}
Lu \textit{et al.} [Ann. Phys. (Berlin) \textbf{534}, 2100550 (2022)]
raised the question of whether bound entangled states can be useful for
ancilla-assisted quantum process tomography (AAQPT). We answer this
question in the affirmative by exhibiting bound entangled states that are
faithful and hence suitable for process reconstruction. We further show
that local filtering, although capable of increasing the trace norm of the
realigned matrix, is ineffective for AAQPT since it generally destroys
faithfulness and thereby renders the resulting states unsuitable for
reliable channel reconstruction. Finally, we investigate the performance
of bound entangled probes and compare them with Werner and isotropic
states, demonstrating that bound entangled states can outperform full rank
separable probes. Our results identify AAQPT as a new operational setting
in which bound entanglement can provide an advantage.
\end{abstract}

\maketitle

\section{Introduction}

One of the fundamental challenges in  quantum information is
the physical verification and characterization of quantum
states. The protocol to completely determine a quantum state is
called Quantum State Tomography~\cite{Nielsen2010}. Given the well known
duality between quantum states and quantum channels via the
Choi-Jamiolkowski isomorphism, one can pose a similar
question about characterization of processes (equivalently,
quantum channels). This procedure is known as Quantum
Process Tomography (QPT).

The techniques to implement QPT can be broadly
classified into direct and indirect methods. Indirect
techniques rely on quantum state tomography to extract
information about the quantum process, requiring an
additional step of applying an inversion map to the output
data. The two main indirect approaches are Standard Quantum
Process Tomography~\cite{Chuang1997,Poyatos1997} and
Ancilla-Assisted Quantum Process Tomography
(AAQPT)~\cite{D_Ariano2003,Altepeter2003}. A
direct technique to perform QPT, known as Direct-
Characterization of Quantum Dynamics, has also been
developed~\cite{Mohseni2006}.
Comparative Resource analysis of these techniques have
subsequently been studied~\cite{Mohseni2008}.

In the present paper, we focus specifically on AAQPT, an
indirect technique that inherently depends on the
invertibility of the output state and exploits the
Choi-Jamiolkowski isomorphism, where maximally entangled
states between the system and ancilla are employed to carry
out QPT~\cite{D_Ariano2003}. The property of invertibility,
termed \emph{faithfulness}, plays a crucial role in the
success of QPT via this method. It turns out that one can
replace maximally entangled states with other states, as
long as they satisfy the faithfulness criterion and still
carry out QPT successfully. Recently,
an operational definition of \emph{faithfulness} has been
formulated for two-qubit systems~\cite{Bao2024}. A
particularly intriguing observation in this context is that
even certain separable mixed states, Werner states and
Isotropic states, which satisfy the faithfulness criterion,
are useful for performing AAQPT~\cite{Altepeter2003}.

The study of bound entangled states has been a long-standing
area of interest in quantum information theory. The
existence of bound entangled states revealed a surprising
and intriguing aspect of quantum systems, and a complete
understanding of bound entanglement remains elusive.
Although bound entangled states are not distillable, for a
long time it was believed that the entanglement present in
them had little practical utility. Over time, however,
several applications of bound entanglement have been
discovered. These include superactivation of bound
entanglement ($i.e.$ using tensor products of bound
entangled states to generate distillable
states)~\cite{Shor2003}, positive key rates in quantum key
distribution~\cite{Chi2007,Mishra2020}, and applications in
quantum metrology~\cite{Toth2018,Pal2021}. More recently,
bound entangled states have also been shown to be useful in
Prepare-and-Measure scenarios~\cite{Roch2025}. These
developments have significantly broadened the role of bound
entanglement in quantum information processing.

A natural question in this context is whether bound
entangled states can also serve as useful resource states
for AAQPT. Recently, Lu \textit{et al.}~\cite{Lu2022}
demonstrated that not all entangled states are useful for
AAQPT and provided an explicit example of a bound entangled
state that fails to perform this task. This led to an
important open question: \emph{Are all bound entangled
states useless for AAQPT?}

In this work, we answer this question in the negative by
demonstrating the existence of bound entangled states that
can indeed be used for AAQPT. Our results provide yet
another application of bound entanglement, further expanding
its scope as a useful resource in quantum information
theory.

The rest of the paper is structured as follows:
In Sec.~\ref{preliminaries}, we introduce the necessary
notions required to understand AAQPT. In
Sec.~\ref{main_sec}, we present the main result, where we
show that bound entangled states are useful for AAQPT.
In Sec.~\ref{sec:local_filtering_AAQPT}, we discuss the role
of local filtering operation in the context of AAQPT. In
Sec.~\ref{comparison_aaqpt},
we compare the performance of bound entangled states to well
known states in context of AAQPT. Finally, we provide a
conclusion in Sec.~\ref{conclusion}.

\section{CJ Isomorphism and CCNR
criterion}\label{preliminaries}
A quantum channel describes the evolution
of a quantum system and is mathematically represented by a
linear, trace-preserving and completely positive (CPTP) map
$\mathcal{E}$. For any input state $\rho$, the output
$\mathcal{E}(\rho)$ can be expressed in the Kraus
form~\cite{Kraus1983}:
\begin{equation}
    \mathcal{E}(\rho) = \sum_n K_n \rho K_n^\dagger,
\end{equation}
where the Kraus operators  $\{K_n\}$ satisfy the
completeness relation, $\sum_n K_n^\dag K_n=\mathbb{I}$.
A fundamental challenge in quantum information is the
identification of an unknown channel $\mathcal{E}$, a task
known as quantum process tomography (QPT).

Consider a system together with an ancilla system and a
joint state $\rho_{AB}$ of the system and ancilla, as shown
in Fig.~\ref{fig:aaqpt_schematic}. Suppose that the quantum
map $\mathcal{E}$ acts on the system while the ancilla
remains unchanged. The resulting evolution of the joint
state is then described by $(\mathcal{E}\otimes
\mathbb{I})(\rho_{AB})$, where $\mathbb{I}$ denotes the
identity map acting on the ancilla. As a special case,
choosing $\rho_{AB}$ to be the maximally entangled state
$|\Phi^+\rangle = \sum_{i=1}^d (1 / \sqrt{d}) \;
|i\rangle \otimes |i\rangle \in \mathcal{H} \otimes
\mathcal{H},$
the resulting state can be written as
\begin{equation}\label{CJiso}
S_{\mathcal{E}}
=
(\mathcal{E}\otimes
\mathbb{I})(|\Phi^+\rangle\langle\Phi^+|).
\end{equation}

Using the Choi-Jamiolkowski
isomorphism~\cite{Jamiolkowski1972,Choi1975,Arrighi2004},
this establishes a one-to-one correspondence between quantum
channels and quantum states. The inverse relation allows one
to reconstruct the channel from the Choi matrix
$S_{\mathcal{E}}$ through
\begin{equation}
\mathcal{E}(\rho)=
\Tr_2\left[
(\mathbb{I}\otimes\rho^T)
S_{\mathcal{E}}
\right],
\end{equation}
where $\Tr_2$ denotes the partial trace over the second
subsystem and $T$ represents transposition in the chosen
basis.

From the Choi-Jamiolkowski isomorphism, it is clear that 
Eq.~(\ref{CJiso}) contains complete information about
the quantum channel $\mathcal{E}$. A natural question that
arises in this context is whether one can replace the
maximally entangled state with some non-maximally
entangled state and still obtain information about ${\mathcal
E}$. This question was addressed
in~\cite{D_Ariano2003}, where it was shown that a
non-maximally entangled state can be used if and only if the singular
values of $\check{R}(\rho_{_{AB}})$ are nonzero ($i.e.$
$\check{R}(\rho_{_{AB}})^{-1}$ exists), where
$\check{R}(\rho_{_{AB}})$ is defined as follows,
\begin{eqnarray}
\check{R}(\rho_{_{AB}}):=(\rho_{_{AB}}^{T_B}\mathbb{F})^{T_A
},\label{checkR}
\end{eqnarray}
with  $\mathbb{F}=\sum_{ij}|ij\rangle\langle ji|$ being the
swap operator. A bipartite state which satisfies this property
is called \emph{faithful}. Further, as shown by Lu \textit{et al.}~\cite{Lu2022},
the singular values of
$\check{R}(\rho_{AB})$ and
$\mathcal{R}(\rho_{AB})$ are identical, where
$\mathcal{R}$ denotes the realignment operation.
The realignment operator plays an important role in
entanglement detection and, as discussed above, is also
crucial in determining whether a given state can be used for
AAQPT. Since it will be central to our discussion in the
next section, we briefly review it here.

Consider a bipartite quantum state
$\rho_{AB}\in\mathcal{H}_A\otimes\mathcal{H}_B$.
Given computational bases for
$\mathcal{H}_A$ and $\mathcal{H}_B$, the state can be
written as
\begin{equation}
\rho_{AB}
=
\sum_{i,j,k,l}
\rho_{ij,kl}
|i\rangle\langle j|
\otimes
|k\rangle\langle l|,
\end{equation}
where $\rho_{ij,kl}$ are the matrix elements.
The realignment operator $\mathcal{R}$ is then defined
through its action on $\rho_{AB}$ as~\cite{Chen2003}
\begin{equation}
\mathcal{R}(\rho_{AB})
=
\sum_{i,j,k,l}
\rho_{ij,kl}
|i\rangle\langle k|
\otimes
|j\rangle\langle l|.
\end{equation}

\begin{figure}[t]
    \centering
    \includegraphics[width=\columnwidth]{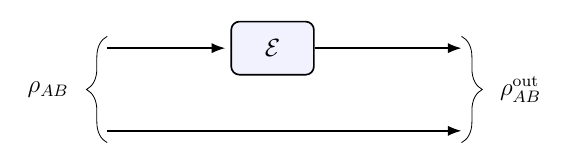}
    \caption{Schematic of Ancilla-assisted quantum process tomography
(AAQPT). The channel $\mathcal{E}$ acts on one subsystem of
the bipartite state $\rho_{AB}$, while the ancilla remains
unchanged. The resulting state
$\rho^{\mathrm{out}}_{AB}
=
(\mathcal{E}\otimes\mathbb{I})(\rho_{AB})$
contains information about the channel; for \textit{faithful} input
states, this correspondence is one-to-one.}
    \label{fig:aaqpt_schematic}
\end{figure}

The computable cross norm realignment (CCNR) criterion
provides a simple yet powerful method for detecting quantum
entanglement. It is based on rearranging the matrix elements
of a bipartite density operator and examining the trace norm
of the resulting operator. Although introduced independently
in different forms, the various formulations of the CCNR
criterion are mathematically
equivalent~\cite{Chen2003,Rudolph2005}.

The CCNR criterion states that for every bipartite
separable state $\rho_{\mathrm{sep}}$,
\begin{equation}
\|\mathcal{R}(\rho_{\mathrm{sep}})\|_{1}
=
\|\mathcal{R}(\rho_{\mathrm{sep}})\|_{\mathrm{tr}}
\le 1,
\label{eq_CCNR}
\end{equation}
where
\[
\|X\|_{1}
=
\|X\|_{\mathrm{tr}}
=
\mathrm{Tr}
\!\left(
\sqrt{XX^\dagger}
\right)
\]
denotes the trace norm (Schatten 1-norm). Therefore, any
violation,
$\|\mathcal{R}(\rho)\|_1>1$,
certifies that $\rho$ is entangled.

\section{Bound entangled states for AAQPT}\label{main_sec}

We now present the main result of this work by showing 
that a family of bound entangled states in $d \otimes d$
dimensions, with $d \ge 4$, can serve as faithful probes
for AAQPT.

To construct our example, we introduce three fundamental 
operators acting on the bipartite Hilbert space
$\mathbb{C}^k \otimes \mathbb{C}^k$. The identity
operator, the Flip (swap) operator, and the projection onto
(unnormalized) maximally entangled vector are
respectively given by
\begin{equation}
\begin{aligned}
\mathbb{I}
&=
\sum_{i,j=1}^k
\ket{i}\!\bra{i}\otimes\ket{j}\!\bra{j},
\\
\mathbb{F}
&=
\sum_{i,j=1}^k
\ket{i}\!\bra{j}\otimes\ket{j}\!\bra{i},
\\
\ket{u}
&=
\sum_{i=1}^{k}
\ket{i}\otimes\ket{i},
\qquad
\ket{u}\!\bra{u}
=
\sum_{i,j=1}^{k}
\ket{i}\!\bra{j}\otimes\ket{i}\!\bra{j}.
\end{aligned}
\label{eq:basic_operators}
\end{equation}
Let

$$\ket{v}
= \sum_{i=1}^n \ket{a_i} \otimes \ket{b_i}
\in \mathbb{C}^k \otimes \mathbb{C}^k,$$

where the set
$\{\ket{a_1},\dots,\ket{a_n},\ket{b_1},\dots,\ket{b_n}\}$
is linearly independent in $\mathbb{C}^k$. Define the
operator
\begin{equation}
\gamma
= \mathbb{I} + \mathbb{F} + \varepsilon\, \ket{v}\!\bra{v},
\qquad \varepsilon > 0.
\end{equation}

As shown by Cariello~\cite{Cariello2019}, this construction
yields a family of states $\gamma$ whose Schmidt number
satisfies

$$SN(\gamma) = n,
\qquad
1 \le n \le \Big\lceil \frac{k-1}{2} \Big\rceil,$$

for every $\varepsilon > 0$, and whose partial transpose
$\gamma^\Gamma$ is positive
under partial transposition (PPT). Consequently, whenever $n \ge 2$ (which implies $k \ge 2n
\ge 4$), choosing $\varepsilon > 0$ sufficiently small
produces bipartite states $\gamma$ that are both PPT and entangled. This provides an
explicit family of PPT entangled states with arbitrarily
large Schmidt number, which will serve as the central
resource in our construction of faithful probes for AAQPT.

To verify that the bound entangled state $\gamma$
constructed above provides a valid probe for AAQPT, we
analyze the action of the realignment map
$\mathcal{R}(\cdot)$ on the operators $\mathbb{I}$,
$\mathbb{F}$, and $\ket{u}\!\bra{u}$. A direct computation
shows that
\begin{equation}
\begin{aligned}
\mathcal{R}(\mathbb{I})
&= \sum_{i,j=1}^k \ket{i}\!\bra{j} \otimes \ket{i}\!\bra{j}
= \ket{u}\!\bra{u}, \\
\mathcal{R}(\mathbb{F})
&= \sum_{i,j=1}^k \ket{i}\!\bra{j} \otimes \ket{j}\!\bra{i}
= \mathbb{F}, \\
\mathcal{R}(\ket{u}\!\bra{u})
&= \sum_{i,j=1}^k \ket{i}\!\bra{i} \otimes \ket{j}\!\bra{j}
= \mathbb{I}.
\end{aligned}
\label{eq:realignment_basic}
\end{equation}
Thus, the subspace spanned by $\Big\{ \mathbb{I}, \mathbb{F},
\ket{u}\!\bra{u} \Big\}$
is invariant under the realignment operation.

We now show that the realigned operator $\mathcal{R}(\gamma)$ is
invertible. By linearity of $\mathcal{R}$ and using the
relations~\eqref{eq:realignment_basic}, we obtain
\begin{equation}
\mathcal{R}(\gamma)
= \ket{u}\!\bra{u} + \mathbb{F} + \varepsilon\,
\mathcal{R}(\ket{v}\!\bra{v}).
\end{equation}

To see that $\ket{u}\!\bra{u}+\mathbb{F}$ is invertible,
decompose $\mathbb{C}^k\otimes\mathbb{C}^k$ into its
symmetric and antisymmetric subspaces. Since
$\mathbb{F}$ acts as $\mathbb{I}$ on the symmetric
subspace and as $-\mathbb{I}$ on the antisymmetric
subspace, while $\ket{u}$ belongs to the symmetric
subspace, the operator
$\ket{u}\!\bra{u}+\mathbb{F}$ restricts to
$\mathbb{I}+\ket{u}\!\bra{u}$ on the symmetric subspace
and to $-\mathbb{I}$ on the antisymmetric subspace. The
operator $\mathbb{I}+\ket{u}\!\bra{u}$ is positive
definite, hence invertible, and $-\mathbb{I}$ is clearly
invertible. Therefore
$\ket{u}\!\bra{u}+\mathbb{F}$ is invertible.
Since the operator $\ket{u}\!\bra{u} + \mathbb{F}$ is
invertible, it follows by continuity that, for sufficiently
small $\varepsilon > 0$, the perturbed operator
$\mathcal{R}(\gamma)$
remains invertible. The explicit form of
$\mathcal{R}(\ket{v}\!\bra{v})$ is irrelevant, since its
contribution can be made arbitrarily small. This establishes
that $\gamma$ is a PPT entangled state whose realignment is
invertible, thereby completing the construction of a bound
entangled state suitable for AAQPT.

Having established the existence of bound entangled states
that can serve as faithful probes for AAQPT, we now present
an explicit example. The state considered below does not
belong to the family $\gamma$ constructed above, but
provides another instance of an interesting bound entangled state that
can be used as a faithful probe for AAQPT. Specifically, we
consider the PPT entangled state
$\rho_{\mathrm{CCNR}} \in \mathbb{C}^4 \otimes
\mathbb{C}^4$, which attains the maximal violation of the
CCNR criterion among PPT entangled states in
$4\otimes4$ dimensions~\cite{Lukacs2025} and has also been
shown to possess operational significance in
prepare-and-measure scenarios~\cite{Marton2025}. The state
$\rho_{\mathrm{CCNR}}$ is defined as
\begin{equation}
\label{useful aaqpt}
\rho_{\mathrm{CCNR}}
=
\sum_{i=1}^4 p_i \,
|\Psi_i\rangle\!\langle \Psi_i|_{AB}
\otimes
\varrho_{A'B'}^{(i)},
\end{equation}
where
\[
p_1 = p_2 = p_3 = \frac{1}{6},
\qquad
p_4 = \frac{1}{2}.
\]

The Bell states are
\[
\ket{\Phi^\pm}
=
\frac{1}{\sqrt{2}}(\ket{00} \pm \ket{11}),
\qquad
\ket{\Psi^\pm}
=
\frac{1}{\sqrt{2}}(\ket{01} \pm \ket{10}),
\]
and the states on $A'B'$ are
\begin{align}
\varrho_{A'B'}^{(1)} &= \ket{\Psi^+}\bra{\Psi^+},
\qquad
\varrho_{A'B'}^{(2)} = \ket{\Psi^-}\bra{\Psi^-}, \nonumber\\
\varrho_{A'B'}^{(3)} &= \ket{\Phi^+}\bra{\Phi^+},
\qquad
\varrho_{A'B'}^{(4)}
=
\frac{1}{3}
\left(
\mathbb{I}_4
-
\ket{\Phi^-}\bra{\Phi^-}
\right).
\end{align}


We employ the CCNR criterion to establish that the
state is bound entangled.
This state after realignment has the following singular
values (SV) :
\begin{align*}
    \mbox{SV}\left[\mathcal{R}(\rho_{CCNR})\right] &= \left\{
 \frac{1}{12}, \frac{1}{12}, \frac{1}{12},
\frac{1}{12}, \frac{1}{12}, \frac{1}{12}, \frac{1}{12},
\right. \\
    &\quad \left.
    \frac{1}{12},
 \frac{1}{12}, \frac{1}{12}, \frac{1}{12},
\frac{1}{12}, \frac{1}{12}, \frac{1}{12}, \frac{1}{12},
\frac{1}{4}
    \right\}
\end{align*}
Since all singular values of $\mathcal{R}(\rho_{\mathrm{CCNR}})$ are strictly positive, the realigned matrix is invertible. Consequently, $\rho_{\mathrm{CCNR}}$ is faithful and can therefore serve as a probe state for AAQPT. Furthermore,
\begin{equation}
\left\| \mathcal{R}(\rho_{\mathrm{CCNR}}) \right\|_{\mathrm{tr}}
=1.5,
\end{equation}
which exceeds unity and hence certifies entanglement of $\rho_{\mathrm{CCNR}}$.

The explicit example discussed above motivates a broader
question, can one identify PPT entangled states that not
only remain faithful for AAQPT, but also exhibit strong
violations of the CCNR
criterion? Since the quantity
$\|\mathcal{R}(\rho)\|_{1}$ measures the degree of CCNR
violation, a natural approach is to maximize it over PPT
states while retaining faithfulness for AAQPT.

To investigate this, we perform a numerical optimization
over bipartite states $\rho\in\mathbb{C}^{d}\otimes
\mathbb{C}^{d}$ subject to the constraints
\begin{equation}
\rho\ge0,\qquad
\rho^\Gamma\ge0,\qquad
\mathrm{Tr}(\rho)=1,
\end{equation}
where $\rho^\Gamma$ denotes the partial transpose of
$\rho$. The optimization objective is taken to be the trace
norm of the realigned matrix,
\begin{equation}
\max_{\rho}\;
\|\mathcal{R}(\rho)\|_{1}.
\end{equation}

To evaluate the trace norm efficiently, we employ its dual
representation~\cite{Lukacs2025}
\begin{equation}
\|X\|_{1}
=
\max_{Y:\,YY^\dagger\le \mathbb{I}}
\mathrm{Tr}(X^\dagger Y),
\end{equation}
which converts the problem into a bilinear optimization.
The resulting optimization is implemented through an
iterative see-saw procedure, alternating between the state
$\rho$ and the auxiliary operator $Y$. At each iteration,
the PPT constraints are imposed and the resulting state is
projected back onto the PPT set in order to maintain
positivity and normalization.

For the $3\otimes3$ case, this procedure yields a
full-rank PPT entangled state that remains faithful for
AAQPT and attains the maximal CCNR violation among PPT
entangled states in $3\otimes3$ dimensions, as identified
numerically in~\cite{Lukacs2025}. The explicit form of this
state is provided in Appendix~\ref{appendix_3x3}. More
generally, the same approach can be extended to arbitrary
$d\otimes d$ dimensions to search for PPT entangled states
with strong CCNR violation while retaining usefulness for
AAQPT. However, the computational complexity of the
underlying optimization increases rapidly with the local
dimension.

\section{Effect of local operations on
AAQPT}\label{sec:local_filtering_AAQPT}

Since the proposal by Gisin~\cite{Gisin1996}, local filtering has been known to reveal operational advantages in several quantum information tasks. For instance, appropriate filters can be used for entanglement purification~\cite{Horodecki1997e,Kwiat2001,Wang2006}, reveal hidden nonlocality~\cite{Masanes2008,Jones2020}, study EPR steering~\cite{Ku2022}, and uncover hidden steerability~\cite{Pramanik2019}. It has also been used to activate hidden teleportation power (HTP)~\cite{Li2021}. 
Here, we investigate the effects of local filtering operations on AAQPT.
Consider the local filtering operation:
\begin{equation}
\rho'=\frac{(A\otimes B)\,\rho\, (A\otimes B)^{\dagger}}
{\operatorname{Tr}[(A\otimes B)\,\rho\, (A\otimes
B)^{\dagger}]},
\label{LQCC}
\end{equation}
where the local filters satisfy $A^{\dagger}A\leq I$ and
$B^{\dagger}B\leq I$.

\medskip

In contrast to unitary local operations, which leave the
trace norm of the
realigned operator invariant, general local operations can
significantly
change the value of the quantity $\|\mathcal{R}(\rho)\|_1 =
\|\mathcal{R}(\rho)\|_{\mathrm{tr}} .$

Quantum operations implementable by local operations and
classical communication are generated by the following
elementary
operations~\cite{Donald2002}: (i) attaching uncorrelated
ancillas,
(ii) tracing out subsystems, (iii) performing local
unitaries, and
(iv) carrying out Lüders--von Neumann projective
measurements.
These transformations affect the realignment norm in
distinct ways.

\begin{proposition}[Rudolph~\cite{Rudolph2003}]
\label{p9_rewrite}
The quantity $\|\mathcal{R}(\rho)\|_1$ satisfies the
following properties:
(i) it is invariant under local unitary transformations;
(ii) it is non-increasing under the addition of uncorrelated
ancillas;
(iii) it is non-increasing under Lüders--von Neumann
projective measurements;
(iv) it may increase, decrease, or remain unchanged under
tracing out
subsystems.
\end{proposition}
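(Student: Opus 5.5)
We would use one structural fact about the realignment map throughout: for product operators, $\mathcal{R}(X\otimes Y)=|X\rangle\!\rangle\langle\!\langle Y^{*}|$, where $|X\rangle\!\rangle=\sum_{ij}X_{ij}|i\rangle|j\rangle$ is the vectorization. By linearity, $\mathcal{R}(\rho)$ is the matrix whose singular values are the operator-Schmidt coefficients of $\rho$ across the $A|B$ cut, and $\|\mathcal{R}(\rho)\|_1$ is their sum. Each item then reduces to how a local operation acts on the operator-Schmidt decomposition $\rho=\sum_k \lambda_k G_k\otimes H_k$, with $\{G_k\}$ and $\{H_k\}$ orthonormal in the Hilbert--Schmidt inner product.

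\emph{(i) Local unitaries.} Using $|UXU^\dagger\rangle\!\rangle=(U\otimes U^{*})|X\rangle\!\rangle$, we get
\[
\mathcal{R}\big((U\otimes V)\rho(U\otimes V)^\dagger\big)=(U\otimes U^{*})\,\mathcal{R}(\rho)\,(V\otimes V^{*})^{T}.
\]
Both outer factors are unitary, so the singular values, and hence the trace norm, are unchanged.

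\emph{(ii) Uncorrelated ancillas.} Take $\rho_{AB}\mapsto\rho_{AB}\otimes\tau_{A'}\otimes\omega_{B'}$ with the cut $AA'|BB'$. The operator-Schmidt decomposition becomes $\sum_k\lambda_k (G_k\otimes\tau)\otimes(H_k\otimes\omega)$. Since $\{G_k\otimes\tau/\|\tau\|_2\}$ and $\{H_k\otimes\omega/\|\omega\|_2\}$ are still orthonormal, the coefficients are $\lambda_k\|\tau\|_2\|\omega\|_2$. Equivalently, up to a reordering of tensor factors, which is implemented by unitaries, $\mathcal{R}$ of the new state is $\mathcal{R}(\rho)\otimes|\tau\rangle\!\rangle\langle\!\langle\omega^{*}|$. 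Hence
\[
\|\mathcal{R}(\rho\otimes\tau\otimes\omega)\|_1=\|\mathcal{R}(\rho)\|_1\,\|\tau\|_2\|\omega\|_2\le\|\mathcal{R}(\rho)\|_1,
\]
because purities are at most $1$. If the ancilla is attached to one side only, the same argument applies with one factor equal to $1$.

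\emph{(iii) L\"uders measurements.} Consider local projective measurements $\{P_i\}$ on $A$ and $\{Q_j\}$ on $B$, with outcomes forgotten, so that $\rho'=\sum_{ij}(P_i\otimes Q_j)\rho(P_i\otimes Q_j)$. Applying the vectorization identity, $\mathcal{R}(\rho')=\Pi\,\mathcal{R}(\rho)\,\Pi'^{T}$, where $\Pi=\sum_i P_i\otimes P_i^{*}$ and $\Pi'=\sum_j Q_j\otimes Q_j^{*}$. One checks these are orthogonal projectors, since the $P_i\otimes P_i^{*}$ are mutually orthogonal projectors. By the submultiplicativity $\|AXB\|_1\le\|A\|_\infty\|X\|_1\|B\|_\infty$, the trace norm cannot increase. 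The step I expect to need care is \emph{selective} (post-selected) measurements: renormalizing by the outcome probability can increase the norm, which is exactly the local-filtering effect discussed next. So I would state and prove (iii) for the non-selective, trace-preserving L\"uders map, which is how I read Rudolph's statement.

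\emph{(iv) Partial trace.} Here examples suffice.
\begin{itemize}
\item Unchanged: tracing out an uncorrelated pure ancilla $|0\rangle\langle0|_{A'}$ leaves the norm the same, by (ii) with $\|\tau\|_2=1$.
\item Increase: $\rho_{AB}\otimes\mathbb{I}_{A'}/2$ has norm $\|\mathcal{R}(\rho)\|_1/\sqrt2$ by (ii), and tracing out $A'$ restores $\|\mathcal{R}(\rho)\|_1$.
\item Decrease: for $|\Phi^+\rangle\langle\Phi^+|_{AB}\otimes|\Phi^+\rangle\langle\Phi^+|_{A'B'}$ of qubits, the cut $AA'|BB'$ gives norm $2\cdot2=4$. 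Tracing out $A'$ gives $|\Phi^+\rangle\langle\Phi^+|_{AB}\otimes\mathbb{I}_{B'}/2$, whose norm is $2\cdot(1/\sqrt2)=\sqrt2<4$ by (ii).
\end{itemize}
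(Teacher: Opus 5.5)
The paper does not prove this proposition; it cites it from Rudolph, so there is no in-paper argument to match. Your proof is correct and self-contained. Its engine is the identity $\mathcal{R}\big((A\otimes B)\rho(A\otimes B)^\dagger\big)=(A\otimes\bar A)\,\mathcal{R}(\rho)\,(B\otimes\bar B)^{T}$, with $\bar A$ the entrywise conjugate (your $A^{*}$), combined with how operator-Schmidt coefficients behave under tensor products. Items (i) and (iii) are the cases where the outer factors are unitary or projectors. Compared with citing Rudolph, this gives a uniform elementary mechanism with some reach beyond the proposition. The same identity gives a rigorous one-line proof of the paper's Proposition~\ref{prop2}: $\operatorname{rank}(A\otimes\bar A)=(\operatorname{rank}A)^2<d_A^2$ forces $\mathcal{R}(\rho')$ to be singular. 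That would replace the informal ``zero block'' argument used there.

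Two remarks tighten your write-up.

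First, restricting (iii) to the non-selective map is necessary, not just a convenient reading. Take $\rho=\tfrac{\mathbb{I}}{2}\otimes\ket{0}\!\bra{0}$, which has $\|\mathcal{R}(\rho)\|_1=1/\sqrt2$. A computational-basis measurement on $A$ produces pure product branches of norm $1$ each, so even the average $\sum_i p_i\|\mathcal{R}(\rho_i)\|_1$ increases. If the outcome is instead stored in a local classical register, $\rho'=\sum_i(P_i\otimes\mathbb{I})\rho(P_i\otimes\mathbb{I})\otimes\ket{i}\!\bra{i}_{A''}$, then the realigned matrix is $V\mathcal{R}(\rho)$ with $V^\dagger V=\Pi$. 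Your bound, and in fact the same value, still holds in that formulation.

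Second, the ``decrease'' example in (iv) uses $\|\mathcal{R}\|_1=2\cdot 2$ for an entangled ancilla $\ket{\Phi^+}\!\bra{\Phi^+}_{A'B'}$, but (ii) was proved only for product ancillas $\tau\otimes\omega$. The fix uses the same argument. Tensoring two operator-Schmidt decompositions gives orthonormal families with coefficients $\lambda_k\mu_m$, so $\|\mathcal{R}(\rho\otimes\sigma)\|_1=\|\mathcal{R}(\rho)\|_1\,\|\mathcal{R}(\sigma)\|_1$ for any $\sigma_{A'B'}$. Combined with the CCNR bound, this also extends (ii) to arbitrary separable, possibly classically correlated, ancillas.
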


As an immediate consequence, the CCNR criterion is not
invariant under general
local operations. In particular, even if a state satisfies
the bound
$\|\mathcal{R}(\rho)\|_1 \le 1$, there exist local trace--
non-increasing
transformations that can map it to a state whose realignment
norm exceeds
this threshold.

\medskip

From the perspective of AAQPT, what matters is that the
realignment map $\mathcal{R}(\cdot)$ remains full rank,
since \emph{faithfulness} is equivalent to invertibility of
the associated Choi--Jamiołkowski map. Although some local
operations, most notably local unitaries preserve this
property, others need not. In particular, a  class of
physically relevant SLOCC maps are \emph{subspace filters}:
these are rank-reducing contractions that preferentially
select a low-dimensional subspace while annihilating the
orthogonal complement. Such filters can enhance certain
operational quantities, yet they tend to introduce
degeneracies in the singular spectrum of the realigned
operator, thereby destroying faithfulness and rendering the
resulting state unusable for AAQPT.

\medskip

\begin{proposition}\label{prop2}
Let $\rho_{AB}$ be a faithful state for AAQPT, i.e.,
$\mathcal R(\rho_{AB})$ is invertible. Let
$A:\mathcal H_A\rightarrow\mathcal H_A$ and
$B:\mathcal H_B\rightarrow\mathcal H_B$ be local filters
such that at least one of them is rank deficient. Then
the filtered state
\begin{equation}
\rho'=
\frac{(A\otimes B)\rho_{AB}(A\otimes B)^\dagger}
{\Tr[(A\otimes B)\rho_{AB}(A\otimes B)^\dagger]}
\end{equation}
cannot be faithful and therefore cannot be used for
AAQPT.
\end{proposition}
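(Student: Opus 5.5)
The plan is to compute how the realignment map interacts with a local filter, and then use a rank bound. Expand $\rho_{AB}=\sum_{\mu}X_\mu\otimes Y_\mu$ in operator-Schmidt form. Realignment sends a product $X\otimes Y$ to $|X\rangle\!\rangle\langle\!\langle Y^*|$, up to the fixed index conventions of the realignment formula above, where $|X\rangle\!\rangle=\sum_{ij}X_{ij}|i\rangle|j\rangle$ is the vectorization. Thus $\mathcal R(\rho)$ is the matrix $\sum_\mu |X_\mu\rangle\!\rangle\langle\!\langle Y_\mu^*|$, viewed as a map from $\mathcal H_B\otimes\mathcal H_B$ to $\mathcal H_A\otimes\mathcal H_A$.

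Under the filter we have $X\mapsto AXA^\dagger$ and $Y\mapsto BYB^\dagger$. The standard vectorization identity $|AXA^\dagger\rangle\!\rangle=(A\otimes\bar A)|X\rangle\!\rangle$ and its analogue on the $B$ side then give
\begin{equation}
\mathcal R\big((A\otimes B)\rho(A\otimes B)^\dagger\big)=(A\otimes \bar A)\,\mathcal R(\rho)\,(B\otimes\bar B)^{T},
\end{equation}
possibly with $A\otimes\bar A$ replaced by $\bar A\otimes A$ or with a transpose or conjugate on the $B$ factor, depending on conventions. I would fix these by checking the formula on $\rho=|i\rangle\langle j|\otimes|k\rangle\langle l|$ using the definition of $\mathcal R$ given earlier. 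None of these convention choices affects ranks. The key structural fact is that the filtered realignment factorizes as (local operator on $A$) $\times\,\mathcal R(\rho)\,\times$ (local operator on $B$). Normalization only contributes a positive scalar, so it can be ignored. I also note that the trace $\Tr[(A\otimes B)\rho(A\otimes B)^\dagger]$ must be nonzero for $\rho'$ to be defined; this is an implicit assumption of the statement.

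Next comes the rank count. If $\operatorname{rank}A=r<d_A$, then $\operatorname{rank}(A\otimes\bar A)=r^2<d_A^2$. Using $\operatorname{rank}(MNP)\le\min\{\operatorname{rank}M,\operatorname{rank}P\}$, we get $\operatorname{rank}\mathcal R(\rho')\le r^2<d_A^2$. The same argument applies if instead $B$ is rank deficient, with $\operatorname{rank}(B\otimes\bar B)<d_B^2$. Hence $\mathcal R(\rho')$ has at least one zero singular value.

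By the equivalence recalled in Sec.~\ref{preliminaries}, the singular values of $\check R$ and $\mathcal R$ coincide (Lu \emph{et al.}). Therefore $\check R(\rho')$ is not invertible, and $\rho'$ is not faithful. Faithfulness of the input $\rho$ is never actually needed beyond framing the statement.

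The main obstacle is getting the index bookkeeping of the factorization identity exactly right. Since only ranks matter, I would sidestep any sign or conjugation subtleties by arguing directly instead. The range of $\mathcal R(\rho')$ is spanned by vectorizations $|AX_\mu A^\dagger\rangle\!\rangle$. Each of these lies in $\operatorname{ran}A\otimes\overline{\operatorname{ran}A}$, a subspace of dimension $r^2$. This range argument is convention-independent and suffices.
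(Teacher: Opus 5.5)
Your proof is correct, but it takes a different route from the paper's.

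\textbf{How the two arguments differ.} The paper argues through supports. Because $A$ annihilates a nontrivial subspace, $\rho'$ takes the block form $\rho_{\mathrm{sub}}\oplus\mathbf 0$. The paper then asserts that realignment, being a rearrangement of entries, cannot remove the resulting kernel. You instead establish the covariance law $\mathcal R(\rho')\propto(A\otimes\bar A)\,\mathcal R(\rho)\,(B\otimes\bar B)^{T}$ and then count ranks. With the paper's convention, $\mathcal R(X\otimes Y)=|X\rangle\!\rangle\langle\!\langle \bar Y|$, this identity holds exactly as you wrote it, so no convention-fixing is needed.

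\textbf{What your route gives.} First, you get a quantitative bound, $\operatorname{rank}\mathcal R(\rho')\le\min\{(\operatorname{rank}A)^2,(\operatorname{rank}B)^2\}$, rather than just ``some kernel.''

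Second, you make explicit the step the paper leaves implicit. What destroys invertibility is the product form of the discarded subspace, not rank deficiency of $\rho'$ as such; the rank-one state $|\Phi^+\rangle\langle\Phi^+|$ is faithful. In the paper's picture, take a local basis adapted to $\operatorname{ran}A$. Then every row of $\mathcal R(\rho')$ indexed by $(i,j)$ with $|i\rangle$ or $|j\rangle$ orthogonal to $\operatorname{ran}A$ vanishes, and that is why a kernel appears.

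Third, your range argument uses the right object. The support of $\rho'$ lies in $\operatorname{ran}A\otimes\operatorname{ran}B$, whereas the paper reasons from $\ker A$. These differ for non-normal filters: for $A=|0\rangle\langle 1|$, $\rho'$ lives on $|0\rangle\otimes\mathcal H_B$, which is exactly $\ker A\otimes\mathcal H_B$. The paper's dimension count still saves its conclusion. The paper's version is shorter and more pictorial.

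\textbf{One qualification.} You remark that faithfulness of the input is never needed. In fact, the invertibility hypothesis is what forces $d_A=d_B$, and your $B$-deficient case relies on it. The bound $\operatorname{rank}\mathcal R(\rho')<d_B^2$ yields a zero singular value of the $d_A^2\times d_B^2$ matrix only when $d_B\le d_A$.
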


\begin{proof}
Suppose one of the filters, say $A$, has rank strictly
smaller than $\dim(\mathcal H_A)$. Then there exists a
nontrivial subspace
$\mathcal K_A\subset\mathcal H_A$ annihilated by $A$.
Consequently,
$A\otimes B$ annihilates the subspace
$\mathcal K_A\otimes\mathcal H_B$.

Hence the support of $\rho'$ is contained entirely in a
proper subspace of
$\mathcal H_A\otimes\mathcal H_B$, and therefore
$\rho'$ can be written in block form as
\[
\rho'
=
\rho_{\mathrm{sub}}
\oplus
\mathbf 0,
\]
where the zero block acts on the discarded subspace.

The realignment operation only rearranges matrix elements
and cannot generate nonzero entries from an identically
vanishing block. Hence $\mathcal R(\rho')$ also contains
a nontrivial kernel and therefore possesses zero singular
values.

Since faithfulness is equivalent to invertibility of
$\mathcal R(\rho')$, the state $\rho'$ cannot be
faithful.
\end{proof}

A concrete instance of this phenomenon appears in the
analysis of Werner states~\cite{Werner1989}. Recall that
\begin{equation}
\rho_W
=\frac{2v}{d(d+1)}P_{+}+\frac{2(1-v)}{d(d-1)}P_{-},\qquad
v\in[0,1],
\end{equation}
where $P_{\pm}=(\mathbb{I}_{d^2}\pm \mathbb{F})/2$ project
onto the
(anti) symmetric subspaces of
$\mathbb{C}^d\otimes\mathbb{C}^d$ and
$\mathbb{F}=\sum_{i,j=1}^d \ket{i}\!\bra{j} \otimes
\ket{j}\!\bra{i}.$ is the
swap operator. For $d>2$, all Werner states satisfy the
reduction criterion~\cite{Horodecki1999c} and are useless
for teleportation, but local filtering can activate
HTP~\cite{Li2021}.

\medskip

Let $\rho_{Wf}$ denote the filtered state obtained by
applying a \emph{subspace filter}, i.e., a rank-$2$ local
map that projects $\rho_W$ onto a fixed two-dimensional
subspace of $\mathbb{C}^d$ while annihilating its orthogonal
complement. The filters used in~\cite{Li2021} take the form
\begin{equation}
A_W = \sigma_z \oplus \mathbf{0}_{d-2},\qquad
B_W = \sigma_x \oplus \mathbf{0}_{d-2},
\end{equation}
where $\sigma_x,\sigma_z$ are Pauli operators and the direct
sum extends them by zeros on the $(d-2)$-dimensional
complement.
The action of $A_W \otimes B_W$ is therefore to retain only
the $\mathbb{C}^2\otimes\mathbb{C}^2$ block of $\rho_W$, up
to normalization. Writing the Bell state $\ket{\Phi^+_2} =
(\ket{00}+\ket{11})/\sqrt{2}$, one obtains
\begin{align}
\nonumber
\rho_{Wf}
&=
\frac{1}{N}\,\Big[(d+1)(1-v)\ket{\Phi^+_2}\!\bra{\Phi^+_2}
\\
&\qquad + v(d-1)\big(\mathbb{I}_4-
\ket{\Phi^+_2}\!\bra{\Phi^+_2}\big)\Big]
\oplus \mathbf{0}_{d^2-4},
\label{eq:rhoWf_block}
\end{align}
where
$N = (d+1)(1-v) + 3v(d-1)$.
Thus, the filter produces a $4$-dimensional block carrying
all the weight, with the remaining $(d^2-4)$-dimensional
part mapped to zero.

The filtered Werner state in
Eq.~\eqref{eq:rhoWf_block} provides an explicit
illustration of the proposition~\ref{prop2}. The subspace filter
projects $\rho_W$ onto a
$\mathbb{C}^2\otimes\mathbb{C}^2$ subspace while mapping the
orthogonal complement to zero, thereby producing the block
structure
$\rho_{Wf}=\rho_{\mathrm{sub}}\oplus\mathbf{0}$.
Although the CCNR trace norm typically increases,
\[
\|\mathcal{R}(\rho_{Wf})\|_{\mathrm{tr}}>
\|\mathcal{R}(\rho_W)\|_{\mathrm{tr}},
\]
the realigned operator becomes rank-deficient due to the
$\mathbf{0}_{d^2-4}$ block in~\eqref{eq:rhoWf_block}.
Consequently, $\mathcal{R}(\rho_{Wf})$ loses invertibility,
and hence $\rho_{Wf}$ becomes non-faithful and therefore
unusable for AAQPT, despite its improved teleportation
fidelity~\cite{Li2021}.


\section{Separable versus entangled probes for AAQPT}\label{comparison_aaqpt}

As mixed separable states are known to be useful for channel
reconstruction~\cite{D_Ariano2003}, it is natural to ask whether bound
entangled states offer any genuine advantage over them for AAQPT. We address
this by comparing representative probe states through a common figure of merit.

As discussed in Sec.~\ref{preliminaries}, the trace norm of the realigned matrix,
$\|\mathcal{R}(\rho)\|_{\mathrm{tr}}$, plays a dual role: it provides an entanglement witness through the CCNR criterion and characterizes the faithfulness of a probe for AAQPT. In particular, separable states satisfy
$\|\mathcal{R}(\rho_{\mathrm{sep}})\|_{\mathrm{tr}}\le 1$, whereas bound entangled states may exhibit
$\|\mathcal{R}(\rho_{\mathrm{BE}})\|_{\mathrm{tr}}>1$. The singular values $\{s_l\}$ of $\mathcal{R}(\rho)$, which coincide with the operator Schmidt coefficients, encode the bipartite correlations of the probe and will serve as the basis for our comparison. Following Ref.~\cite{D_Ariano2001}, we quantify the faithfulness
of a probe by
\begin{equation}
\mathcal{F}(\rho) \equiv \sum_{l=1}^{d^2} s_l^2
= \Tr\!\left[\mathcal{R}(\rho)^\dagger \mathcal{R}(\rho)\right]
= \Tr(\rho^2),
\end{equation}
which coincides with the purity.

The two norms of $\mathcal{R}(\rho)$ probe complementary features of this
spectrum. The trace norm $\|\mathcal{R}(\rho)\|_{\mathrm{tr}} = \sum_l s_l$ is
linear in the singular values and registers only the total correlation
strength, whereas the faithfulness $\|\mathcal{R}(\rho)\|_2^2 = \sum_l s_l^2 =
\Tr(\rho^2)$ is quadratic and therefore sensitive to how those correlations are
distributed across the operator Schmidt spectrum. The decisive requirement for
AAQPT, however, is invertibility of the linear reconstruction map, which demands
that every $s_l$ be strictly nonzero, i.e.\ that the probe have full operator
Schmidt rank. A probe with a more uniform singular spectrum can thus outperform
one with larger total correlations but reduced rank and hence the usefulness of a probe
is governed by the full support of its realigned spectrum, not by the magnitude
of its correlations alone.

Isotropic states form a highly symmetric family on
$\mathbb{C}^d \otimes \mathbb{C}^d$ that interpolates between separable and
maximally entangled states. With
\begin{equation}
\label{isotropic-state-defn}
\rho_{\mathrm{iso}}
= \frac{1-\alpha}{d^2}\,\mathbb{I}
+ \alpha \, \ket{\Phi^+}\!\bra{\Phi^+},
\qquad
-\frac{1}{d^2-1} \le \alpha \le 1,
\end{equation}
the state is separable for $\alpha \le 1/(d+1)$ and entangled otherwise. The
realigned trace norm is~\cite{Rudolph2005}
\begin{equation}
\|\mathcal{R}(\rho_{\mathrm{iso}})\|_{\mathrm{tr}}
=
\begin{cases}
d\,\alpha + \dfrac{1-\alpha}{d}, & \alpha \ge 0, \\[6pt]
\dfrac{1-\alpha}{d}, & \alpha < 0,
\end{cases}
\end{equation}
so the CCNR condition $\|\mathcal{R}\|_{\mathrm{tr}} \le 1$ reduces to
$\alpha \le 1/(d+1)$, reproducing the exact separable region; for this family the
CCNR criterion is therefore both necessary and sufficient. The two endpoints fix
the extremes of interest. At $\alpha = 1$ one recovers the maximally entangled
state $\rho_{\mathrm{ME}} = \ket{\Phi^+}\!\bra{\Phi^+}$, for which
$\|\mathcal{R}(\rho_{\mathrm{ME}})\|_{\mathrm{tr}} = d$ (its maximal value) and
all singular values equal $1/d$, giving the optimal faithfulness
$\mathcal{F}(\rho_{\mathrm{ME}}) = 1$. The separability boundary
$\alpha = 1/(d+1)$ defines the optimal separable probe, with
$\|\mathcal{R}\|_{\mathrm{tr}} = 1$ and $\mathcal{F}(\rho_{\mathrm{iso}}) =
2/[d(d+1)]$.

Werner states form a second one-parameter family on
$\mathbb{C}^d \otimes \mathbb{C}^d$,
\begin{equation}
\label{eq:werner-def}
\rho_{W}
= \frac{1}{d^{3}-d}
\bigl[(d-f)\,\mathbb{I} + (df-1)\,\mathbb{F}\bigr],
\qquad -1 \le f \le 1,
\end{equation}
with $\mathbb{F}$ the flip operator of Eq.~\eqref{eq:basic_operators}, whose realigned
trace norm is~\cite{Rudolph2005}
\begin{equation}
\|\mathcal{R}(\rho_{W})\|_{\mathrm{tr}}
=
\begin{cases}
\dfrac{2}{d} - f, & -1 \le f \le \dfrac{1}{d}, \\[6pt]
f, & \dfrac{1}{d} \le f \le 1.
\end{cases}
\end{equation}
Here the dependence on entanglement is far weaker than for isotropic states as the maximally entangled Werner state ($f=-1$) reaches only
$\|\mathcal{R}(\rho_{W})\|_{\mathrm{tr}} = 1 + 2/d$, which tends to unity as
$d \to \infty$. This inequivalence between the two families for $d > 2$ persists
despite their local-unitary equivalence in $\mathbb{C}^2 \otimes \mathbb{C}^2$.
Its faithfulness is $\mathcal{F}(\rho_{W}) = 2/[d(d-1)]$.

For the bound entangled state $\rho_{\mathrm{CCNR}} \in
\mathbb{C}^4 \otimes \mathbb{C}^4$ that maximally violates the CCNR criterion,
the faithfulness coincides with that of the maximally entangled Werner state.
Table~\ref{tab:faithfulness} collects these figures of merit at $d=4$. The
maximally entangled state remains optimal, but the bound entangled state matches
the free entangled Werner state and clearly exceeds the optimal separable
isotropic probe. Bound entanglement therefore provides a genuine operational
advantage over known full-rank separable states for AAQPT.

\begin{table}[h!]
\centering
\renewcommand{\arraystretch}{1.2}
\begin{tabular}{|c|c|c|}
\hline
State & Class & $\mathcal{F}(\rho)$ \\
\hline
$\ket{\Phi^+}\!\bra{\Phi^+}$ & Maximally entangled & 1 \\
\hline
Werner ($f=-1$) & Free entangled & 0.1667 \\
\hline
$\rho_{\mathrm{CCNR}}$ & Bound entangled & 0.1667 \\
\hline
Isotropic ($\alpha=1/5$) & Separable & 0.1000 \\
\hline
\end{tabular}
\caption{Faithfulness of representative AAQPT probe states in
$\mathbb{C}^{4}\otimes\mathbb{C}^{4}$.}
\label{tab:faithfulness}
\end{table}

\section{Conclusion}\label{conclusion}

In this work, we have shown that bound entangled states can serve as useful resources for ancilla-assisted quantum process tomography. In particular, we demonstrated that certain PPT entangled states are faithful and can outperform full-rank separable states in terms of faithfulness, thereby providing a genuine advantage for channel reconstruction. Our analysis also clarifies the role of local filtering in AAQPT and highlights structural aspects of complete tensor rank bound entangled states.

More generally, our results illustrate that the usefulness of a state for AAQPT is governed by the invertibility of its realigned matrix, rather than by the amount of entanglement as quantified by the violation of the CCNR criterion. In this sense, faithfulness and entanglement detection represent distinct features, and enhancements achieved through local filtering need not translate into improved performance for process tomography.

Several questions remain to be understood. One concerns the analytical characterization of the maximal CCNR violation attainable by PPT entangled states. Although numerical approaches based on semidefinite programming and iterative see-saw methods provide valuable insight, obtaining analytical results in higher dimensions appears to be considerably more challenging. Another direction is the characterization of bound entangled states that are useful for AAQPT. Since a bipartite state is useful for AAQPT precisely when it has maximal operator Schmidt rank~\cite{Altepeter2003}, this amounts to characterizing the class of bound entangled states with maximal operator Schmidt rank. Finally, the present work identifies AAQPT as a further operational setting in which bound entanglement can be advantageous, and it would be worthwhile to investigate the experimental realization of such schemes.

\section{Acknowledgment}
I would like to thank Arvind and Daniel Cariello for helpful
discussions and comments.

\bibliographystyle{apsrev4-2}
\bibliography{ref}

@article{Werner1989,
  title = {Quantum states with Einstein-Podolsky-Rosen correlations admitting a hidden-variable model},
  author = {Werner, Reinhard F.},
  journal = {Phys. Rev. A},
  volume = {40},
  issue = {8},
  pages = {4277--4281},
  numpages = {0},
  year = {1989},
  month = {Oct},
  publisher = {American Physical Society},
  doi = {10.1103/PhysRevA.40.4277},
  url = {https://link.aps.org/doi/10.1103/PhysRevA.40.4277}
}

@article{Gisin1996,
title = {Hidden quantum nonlocality revealed by local filters},
journal = {Physics Letters A},
volume = {210},
number = {3},
pages = {151-156},
year = {1996},
issn = {0375-9601},
doi = {https://doi.org/10.1016/S0375-9601(96)80001-6},
url = {https://www.sciencedirect.com/science/article/pii/S0375960196800016},
author = {N. Gisin}
}

@article{Horodecki1999c,
  title = {Reduction criterion of separability and limits for a class of distillation protocols},
  author = {Horodecki, Micha\l{} and Horodecki, Pawe\l{}},
  journal = {Phys. Rev. A},
  volume = {59},
  issue = {6},
  pages = {4206--4216},
  numpages = {0},
  year = {1999},
  month = {Jun},
  publisher = {American Physical Society},
  doi = {10.1103/PhysRevA.59.4206},
  url = {https://link.aps.org/doi/10.1103/PhysRevA.59.4206}
}

@article{Lu2022,
   title={Not All Entangled States are Useful for Ancilla‐Assisted Quantum Process Tomography},
   volume={534},
   ISSN={1521-3889},
   url={http://dx.doi.org/10.1002/andp.202100550},
   DOI={10.1002/andp.202100550},
   number={5},
   journal={Annalen der Physik},
   publisher={Wiley},
   author={Lu, Guo‐Dong and Zhang, Zhou and Dai, Yue and Dong, Yu‐Li and Zhang, Cheng‐Jie},
   year={2022},
   month=feb }

@article{Chuang1997,
   title={Prescription for experimental determination of the dynamics of a quantum black box},
   volume={44},
   ISSN={1362-3044},
   url={http://dx.doi.org/10.1080/09500349708231894},
   DOI={10.1080/09500349708231894},
   number={11–12},
   journal={Journal of Modern Optics},
   publisher={Informa UK Limited},
   author={Chuang, Isaac L. and Nielsen, M. A.},
   year={1997},
   month=nov, pages={2455–2467} }

@article{Poyatos1997,
  title = {Complete Characterization of a Quantum Process: The Two-Bit Quantum Gate},
  author = {Poyatos, J. F. and Cirac, J. I. and Zoller, P.},
  journal = {Phys. Rev. Lett.},
  volume = {78},
  issue = {2},
  pages = {390--393},
  numpages = {0},
  year = {1997},
  month = {Jan},
  publisher = {American Physical Society},
  doi = {10.1103/PhysRevLett.78.390},
  url = {https://link.aps.org/doi/10.1103/PhysRevLett.78.390}
}

@article{D_Ariano2001,
  title = {Quantum Tomography for Measuring Experimentally the Matrix Elements of an Arbitrary Quantum Operation},
  author = {D'Ariano, G. M. and Lo Presti, P.},
  journal = {Phys. Rev. Lett.},
  volume = {86},
  issue = {19},
  pages = {4195--4198},
  numpages = {0},
  year = {2001},
  month = {May},
  publisher = {American Physical Society},
  doi = {10.1103/PhysRevLett.86.4195},
  url = {https://link.aps.org/doi/10.1103/PhysRevLett.86.4195}
}

@article{D_Ariano2003,
   title={Imprinting Complete Information about a Quantum Channel on its Output State},
   volume={91},
   ISSN={1079-7114},
   url={http://dx.doi.org/10.1103/PhysRevLett.91.047902},
   DOI={10.1103/physrevlett.91.047902},
   number={4},
   journal={Physical Review Letters},
   publisher={American Physical Society (APS)},
   author={D’Ariano, Giacomo Mauro and Lo Presti, Paoloplacido},
   year={2003},
   month=jul }

@article{Altepeter2003,
   title={Ancilla-Assisted Quantum Process Tomography},
   volume={90},
   ISSN={1079-7114},
   url={http://dx.doi.org/10.1103/PhysRevLett.90.193601},
   DOI={10.1103/physrevlett.90.193601},
   number={19},
   journal={Physical Review Letters},
   publisher={American Physical Society (APS)},
   author={Altepeter, J. B. and Branning, D. and Jeffrey, E. and Wei, T. C. and Kwiat, P. G. and Thew, R. T. and O’Brien, J. L. and Nielsen, M. A. and White, A. G.},
   year={2003},
   month=may }

@article{Mohseni2006,
  title = {Direct Characterization of Quantum Dynamics},
  author = {Mohseni, M. and Lidar, D. A.},
  journal = {Phys. Rev. Lett.},
  volume = {97},
  issue = {17},
  pages = {170501},
  numpages = {4},
  year = {2006},
  month = {Oct},
  publisher = {American Physical Society},
  doi = {10.1103/PhysRevLett.97.170501},
  url = {https://link.aps.org/doi/10.1103/PhysRevLett.97.170501}
}

@article{Mohseni2008,
  title = {Quantum-process tomography: Resource analysis of different strategies},
  author = {Mohseni, M. and Rezakhani, A. T. and Lidar, D. A.},
  journal = {Phys. Rev. A},
  volume = {77},
  issue = {3},
  pages = {032322},
  numpages = {15},
  year = {2008},
  month = {Mar},
  publisher = {American Physical Society},
  doi = {10.1103/PhysRevA.77.032322},
  url = {https://link.aps.org/doi/10.1103/PhysRevA.77.032322}
}

@book{Nielsen2010,
  title={Quantum computation and quantum information},
  author={Nielsen, Michael A and Chuang, Isaac L},
  year={2010},
  publisher={Cambridge university press}
}

@article{Chi2007,
  title = {Bound entangled states with a nonzero distillable key rate},
  author = {Chi, Dong Pyo and Choi, Jeong Woon and Kim, Jeong San and Kim, Taewan and Lee, Soojoon},
  journal = {Phys. Rev. A},
  volume = {75},
  issue = {3},
  pages = {032306},
  numpages = {7},
  year = {2007},
  month = {Mar},
  publisher = {American Physical Society},
  doi = {10.1103/PhysRevA.75.032306},
  url = {https://link.aps.org/doi/10.1103/PhysRevA.75.032306}
}

@article{Mishra2020,
  title = {Increasing distillable key rate from bound entangled states by using local filtration},
  author = {Mishra, Mayank and Sengupta, Ritabrata and Arvind},
  journal = {Phys. Rev. A},
  volume = {102},
  issue = {3},
  pages = {032415},
  numpages = {10},
  year = {2020},
  month = {Sep},
  publisher = {American Physical Society},
  doi = {10.1103/PhysRevA.102.032415},
  url = {https://link.aps.org/doi/10.1103/PhysRevA.102.032415}
}

@article{Toth2018,
  title = {Quantum States with a Positive Partial Transpose are Useful for Metrology},
  author = {T\'oth, G\'eza and V\'ertesi, Tam\'as},
  journal = {Phys. Rev. Lett.},
  volume = {120},
  issue = {2},
  pages = {020506},
  numpages = {6},
  year = {2018},
  month = {Jan},
  publisher = {American Physical Society},
  doi = {10.1103/PhysRevLett.120.020506},
  url = {https://link.aps.org/doi/10.1103/PhysRevLett.120.020506}
}

@article{Shor2003,
  title = {Superactivation of Bound Entanglement},
  author = {Shor, Peter W. and Smolin, John A. and Thapliyal, Ashish V.},
  journal = {Phys. Rev. Lett.},
  volume = {90},
  issue = {10},
  pages = {107901},
  numpages = {4},
  year = {2003},
  month = {Mar},
  publisher = {American Physical Society},
  doi = {10.1103/PhysRevLett.90.107901},
  url = {https://link.aps.org/doi/10.1103/PhysRevLett.90.107901}
}

@article{Pal2021,
  title = {Bound entangled singlet-like states for quantum metrology},
  author = {P\'al, K\'aroly F. and T\'oth, G\'eza and Bene, Erika and V\'ertesi, Tam\'as},
  journal = {Phys. Rev. Res.},
  volume = {3},
  issue = {2},
  pages = {023101},
  numpages = {18},
  year = {2021},
  month = {May},
  publisher = {American Physical Society},
  doi = {10.1103/PhysRevResearch.3.023101},
  url = {https://link.aps.org/doi/10.1103/PhysRevResearch.3.023101}
}

@article{Cariello2019,
   title={Inequalities for the Schmidt number of bipartite states},
   volume={110},
   ISSN={1573-0530},
   url={http://dx.doi.org/10.1007/s11005-019-01244-1},
   DOI={10.1007/s11005-019-01244-1},
   number={4},
   journal={Letters in Mathematical Physics},
   publisher={Springer Science and Business Media LLC},
   author={Cariello, Daniel},
   year={2019},
   month=nov, pages={827–833} }

@article{Chen2003,
author = {Chen, Kai and Wu, Ling-An},
title = {A matrix realignment method for recognizing entanglement},
year = {2003},
issue_date = {May 2003},
publisher = {Rinton Press, Incorporated},
address = {Paramus, NJ},
volume = {3},
number = {3},
issn = {1533-7146},
journal = {Quantum Info. Comput.},
month = may,
pages = {193–202},
numpages = {10}
}

@article{Donald2002,
   title={The uniqueness theorem for entanglement measures},
   volume={43},
   ISSN={1089-7658},
   url={http://dx.doi.org/10.1063/1.1495917},
   DOI={10.1063/1.1495917},
   number={9},
   journal={Journal of Mathematical Physics},
   publisher={AIP Publishing},
   author={Donald, Matthew J. and Horodecki, Michał and Rudolph, Oliver},
   year={2002},
   month=sep, pages={4252–4272} }

@article{Rudolph2003,
   title={Some properties of the computable cross-norm criterion for separability},
   volume={67},
   ISSN={1094-1622},
   url={http://dx.doi.org/10.1103/PhysRevA.67.032312},
   DOI={10.1103/physreva.67.032312},
   number={3},
   journal={Physical Review A},
   publisher={American Physical Society (APS)},
   author={Rudolph, Oliver},
   year={2003},
   month=mar }

@article{Rudolph2005,
   title={Further Results on the Cross Norm Criterion for Separability},
   volume={4},
   ISSN={1573-1332},
   url={http://dx.doi.org/10.1007/s11128-005-5664-1},
   DOI={10.1007/s11128-005-5664-1},
   number={3},
   journal={Quantum Information Processing},
   publisher={Springer Science and Business Media LLC},
   author={Rudolph, Oliver},
   year={2005},
   month=aug, pages={219–239} }

@book{Kraus1983,
  title={States, Effects, and Operations Fundamental Notions of Quantum Theory: Lectures in Mathematical Physics at the University of Texas at Austin},
  author={Kraus, Karl and B{\"o}hm, Arno and Dollard, John D and Wootters, WH},
  year={1983},
  publisher={Springer}
}

@article{Jamiolkowski1972,
  title={Linear transformations which preserve trace and positive semidefiniteness of operators},
  author={Jamio{\l}kowski, Andrzej},
  journal={Reports on mathematical physics},
  volume={3},
  number={4},
  pages={275--278},
  year={1972},
  publisher={Elsevier}
}

@article{Choi1975,
  title={Completely positive linear maps on complex matrices},
  author={Choi, Man-Duen},
  journal={Linear algebra and its applications},
  volume={10},
  number={3},
  pages={285--290},
  year={1975},
  publisher={Elsevier}
}

@article{Arrighi2004,
   title={On quantum operations as quantum states},
   volume={311},
   ISSN={0003-4916},
   url={http://dx.doi.org/10.1016/j.aop.2003.11.005},
   DOI={10.1016/j.aop.2003.11.005},
   number={1},
   journal={Annals of Physics},
   publisher={Elsevier BV},
   author={Arrighi, Pablo and Patricot, Christophe},
   year={2004},
   month=may, pages={26–52} }

@article{Lukacs2025,
   title={Iterative optimization in quantum metrology and entanglement theory using semidefinite programming},
   volume={11},
   ISSN={2058-9565},
   url={http://dx.doi.org/10.1088/2058-9565/ae24a6},
   DOI={10.1088/2058-9565/ae24a6},
   number={1},
   journal={Quantum Science and Technology},
   publisher={IOP Publishing},
   author={Luk{\'a}cs, {\'A}rp{\'a}d and Tr{\'e}nyi, R{\'o}bert and V{\'e}rtesi, Tam{\'a}s and T{\'o}th, G{\'e}za},
   year={2026},
   month=jan, pages={015042} }

@article{Bao2024,
   title={Quantum properties of bipartite separable mixed state for ancilla-assisted process tomography},
   volume={25},
   ISSN={1573-1332},
   url={http://dx.doi.org/10.1007/s11128-025-05036-6},
   DOI={10.1007/s11128-025-05036-6},
   number={1},
   journal={Quantum Information Processing},
   publisher={Springer Science and Business Media LLC},
   author={Bao, Zhuoran and James, Daniel F. V.},
   year={2026},
   month=Jan }

@article{Roch2025,
  title = {Bound Entangled States Are Useful in Prepare-and-Measure Scenarios},
  author = {Roch i Carceller, Carles and Tavakoli, Armin},
  journal = {Phys. Rev. Lett.},
  volume = {134},
  issue = {12},
  pages = {120203},
  numpages = {6},
  year = {2025},
  month = {Mar},
  publisher = {American Physical Society},
  doi = {10.1103/PhysRevLett.134.120203},
  url = {https://link.aps.org/doi/10.1103/PhysRevLett.134.120203}
}

@article{Marton2025,
   title={Bound entanglement-assisted prepare-and-measure scenarios based on four-dimensional quantum messages},
   volume={10},
   ISSN={2058-9565},
   url={http://dx.doi.org/10.1088/2058-9565/ae095f},
   DOI={10.1088/2058-9565/ae095f},
   number={4},
   journal={Quantum Science and Technology},
   publisher={IOP Publishing},
   author={Márton, István and Bene, Erika and Vértesi, Tamás},
   year={2025},
   month=oct, pages={04LT02} }

@article{Li2021,
  title = {Activating hidden teleportation power: Theory and experiment},
  author = {Li, Jyun-Yi and Fang, Xiao-Xu and Zhang, Ting and Tabia, Gelo Noel M. and Lu, He and Liang, Yeong-Cherng},
  journal = {Phys. Rev. Res.},
  volume = {3},
  issue = {2},
  pages = {023045},
  numpages = {18},
  year = {2021},
  month = {Apr},
  publisher = {American Physical Society},
  doi = {10.1103/PhysRevResearch.3.023045},
  url = {https://link.aps.org/doi/10.1103/PhysRevResearch.3.023045}
}

@article{Pramanik2019,
  title = {Revealing hidden quantum steerability using local filtering operations},
  author = {Pramanik, Tanumoy and Cho, Young-Wook and Han, Sang-Wook and Lee, Sang-Yun and Kim, Yong-Su and Moon, Sung},
  journal = {Phys. Rev. A},
  volume = {99},
  issue = {3},
  pages = {030101},
  numpages = {5},
  year = {2019},
  month = {Mar},
  publisher = {American Physical Society},
  doi = {10.1103/PhysRevA.99.030101},
  url = {https://link.aps.org/doi/10.1103/PhysRevA.99.030101}
}

@article{Masanes2008,
  title = {All Bipartite Entangled States Display Some Hidden Nonlocality},
  author = {Masanes, Llu\'{\i}s and Liang, Yeong-Cherng and Doherty, Andrew C.},
  journal = {Phys. Rev. Lett.},
  volume = {100},
  issue = {9},
  pages = {090403},
  numpages = {4},
  year = {2008},
  month = {Mar},
  publisher = {American Physical Society},
  doi = {10.1103/PhysRevLett.100.090403},
  url = {https://link.aps.org/doi/10.1103/PhysRevLett.100.090403}
}

@article{Wang2006,
  title = {Experimental Entanglement Distillation of Two-Qubit Mixed States under Local Operations},
  author = {Wang, Zhi-Wei and Zhou, Xiang-Fa and Huang, Yun-Feng and Zhang, Yong-Sheng and Ren, Xi-Feng and Guo, Guang-Can},
  journal = {Phys. Rev. Lett.},
  volume = {96},
  issue = {22},
  pages = {220505},
  numpages = {4},
  year = {2006},
  month = {Jun},
  publisher = {American Physical Society},
  doi = {10.1103/PhysRevLett.96.220505},
  url = {https://link.aps.org/doi/10.1103/PhysRevLett.96.220505}
}

@article{Horodecki1997e,
  title = {Inseparable Two Spin- $\frac{1}{2}$ Density Matrices Can Be Distilled to a Singlet Form},
  author = {Horodecki, Micha\l{} and Horodecki, Pawe\l{} and Horodecki, Ryszard},
  journal = {Phys. Rev. Lett.},
  volume = {78},
  issue = {4},
  pages = {574--577},
  numpages = {0},
  year = {1997},
  month = {Jan},
  publisher = {American Physical Society},
  doi = {10.1103/PhysRevLett.78.574},
  url = {https://link.aps.org/doi/10.1103/PhysRevLett.78.574}
}

@article{Kwiat2001,
  title={Experimental entanglement distillation and ‘hidden’non-locality},
  author={Kwiat, Paul G and Barraza-Lopez, Salvador and Stefanov, Andre and Gisin, Nicolas},
  journal={Nature},
  volume={409},
  number={6823},
  pages={1014--1017},
  year={2001},
  publisher={Nature Publishing Group UK London}
}

@article{Ku2022,
  title={Complete classification of steerability under local filters and its relation with measurement incompatibility},
  author={Ku, Huan-Yu and Hsieh, Chung-Yun and Chen, Shin-Liang and Chen, Yueh-Nan and Budroni, Costantino},
  journal={Nature communications},
  volume={13},
  number={1},
  pages={4973},
  year={2022},
  publisher={Nature Publishing Group UK London}
}

@article{Jones2020,
  title={Exploring classical correlations in noise to recover quantum information using local filtering},
  author={Jones, Daniel E and Kirby, Brian T and Riccardi, Gabriele and Antonelli, Cristian and Brodsky, Michael},
  journal={New Journal of Physics},
  volume={22},
  number={7},
  pages={073037},
  year={2020},
  publisher={IOP Publishing}
}

\onecolumngrid

\appendix

\section{ $3\otimes3$ PPT entangled state with maximal CCNR violation}
\label{appendix_3x3}

We now provide the explicit form of the $3\otimes3$ PPT
entangled state discussed in the main text, generated using
the numerical optimization procedure introduced in
Section~\ref{main_sec}. Since the optimization is performed
subject to the constraint $\rho^{\Gamma}\geq 0$, the
resulting state is PPT by
construction. The state is given by

\begin{equation}
\small
\rho=
\begin{pmatrix}
0.19474&0.03386&-0.00588&0.03389&-0.05209&-0.03997&0.04765&-0.02083&0.03734\\
0.03386&0.07216&0.02896&0.04847&-0.00093&-0.02711&-0.03363&-0.01904&-0.05696\\
-0.00588&0.02896&0.07508&0.00102&0.06799&-0.00988&-0.05149&-0.01154&0.00288\\
0.03389&0.04847&0.00102&0.05986&0.01951&-0.05253&0.01890&-0.02943&-0.04161\\
-0.05209&-0.00093&0.06799&0.01951&0.17277&-0.02847&0.02028&-0.07422&0.02861\\
-0.03997&-0.02711&-0.00988&-0.05253&-0.02847&0.11131&-0.01357&-0.03116&0.02362\\
0.04765&-0.03363&-0.05149&0.01890&0.02028&-0.01357&0.10703&-0.05361&0.04412\\
-0.02083&-0.01904&-0.01154&-0.02943&-0.07422&-0.03116&-0.05361&0.11615&-0.01626\\
0.03734&-0.05696&0.00288&-0.04161&0.02861&0.02362&0.04412&-0.01626&0.09090
\end{pmatrix}
\end{equation}
\normalsize

The singular values (SV) of the realigned matrix
$\mathcal{R}(\rho)$ are
\begin{align*}
    \mbox{SV}\left[\mathcal{R}(\rho)\right]
    &=
    \left\{
    0.3401,\,
    0.1712,\,
    0.1447,\,
    0.1418,\,
    0.1202,\,
    0.1197,\,
    0.0568,\,
    0.0490,\,
    0.0455
    \right\}.
\end{align*}

Since all singular values of
$\mathcal{R}(\rho)$ are strictly positive, the realigned
matrix is invertible. Consequently, $\rho$ is faithful and
can therefore serve as a probe state for AAQPT.
Furthermore,

\begin{equation}
\left\|
\mathcal{R}(\rho)
\right\|_{\mathrm{tr}}
=
1.1890,
\end{equation}

which certifies entanglement via the CCNR criterion.

\end{document}